\documentclass[aps,prl,reprint]{revtex4-2}
\usepackage{tikz}
\usepackage{float}
\usepackage{color}
\usepackage{xcolor}
\definecolor{blue}{rgb}{0,0.5,1}
\usepackage{braket}
\usepackage{amsmath}
\usepackage{amsthm}
\usepackage{amsfonts}

\usepackage{hyperref}

\usepackage{amssymb}
\usepackage{graphicx} % Required for inserting images

\DeclareMathOperator{\Tr}{Tr}

\newtheorem{theorem}{Theorem}

\theoremstyle{remark} % Uses italicized text, no bold header
\newtheorem*{remark}{Remark}
\theoremstyle{definition} % This gives upright text (not italic) and adds space.
\newtheorem{definition}{Definition}

\begin{document}

\title{Necessary and sufficient conditions for the $N$-representability of functionals of the 1-electron reduced density matrix}
\author{Jannis Erhard}%
\affiliation{Department of Chemistry \& Chemical Biology, McMaster University,\\ 1280 Main St.\ West, Hamilton, Ontario, L8S 4M1, Canada}
\author{Paul W.\ Ayers}%
\email{ayers@mcmaster.ca}
\affiliation{Department of Chemistry \& Chemical Biology, McMaster University,\\ 1280 Main St.\ West, Hamilton, Ontario, L8S 4M1, Canada}\date{\today}

\begin{abstract}
We establish necessary and sufficient conditions for the $N$-representability of the universal one-electron reduced density matrix functional. Functionals satisfying these conditions are guaranteed to yield variational upper bounds on the true energy in one-electron reduced density matrix functional theory, regardless of the strength of the interparticle repulsion. Conversely, any functional violating these conditions will necessarily underestimate the true energy for certain systems. These exact constraints 
impose a stringent restriction on density matrix functional approximations, as many existing functionals---including the Hartree-Fock functional---appear to violate them. {This mathematical formalism, therefore, can guide the development of new approximate functionals and numerical algorithms.}
\end{abstract}
\maketitle

\section{Introduction}
One-electron reduced density matrix functional theory (1DMFT) offers a promising alternative to traditional approaches like DFT for tackling strongly-correlated quantum systems \cite{Piris_2024}. Unlike methods based on the wavefunction or electron density, 1DMFT treats the 1-electron reduced density matrix (1DM) as the fundamental variable \cite{Loewdin_I, Loewdin_II,mezeyManyelectronDensitiesReduced2000}. The theoretical foundation of 1DMFT was established by Gilbert \cite{GilbertExtension}, who showed that the ground-state energy can be expressed as a universal functional of the 1DM, while Levy and Valone later provided a rigorous formulation based on the constrained search\cite{LevyValoneLevy, LevyValoneValone}. Other, more complicated, formulations based on constrained search and Legendre transform have also been proposed.\cite{ayersGeneralizedDensityfunctionalTheory2005,PaulGolden,feliciangeliNoncommutativeEntropicOptimal2023}

 While density-functional theory (DFT) benefits from a well-developed formal framework, 1DMFT currently lacks an equally rigorous foundation \cite{LiebOnDFT,Cohen2008,SavinAC,Savin_2003, Goerling_1996,tealeAccurateCalculationModeling2010,
tealeCalculationAdiabaticconnectionCurves2009,ayersAxiomaticFormulationsHohenbergKohn2006,lewinUniversalFunctionalsDensity2023,
tealeLiebVariationPrinciple2022}, though recent work by Fredheim and Kvaal has started to bridge that gap \cite{fredheimReducedDensityMatrix2025}. The most accurate and versatile functionals in DFT rely on deep insights into the theoretical framework of DFT to derive properties of the exact functional, which are then used to constraint the construction of approximate functionals\cite{Scuseria_2008,PBE, Becke, SCAN, scaledSigma, PSA,appl_rep_cond_1, appl_rep_cond_2}. The absence of a similarly systematic approach to 1DMFT makes functional construction more challenging, as approximations often rely on empirical corrections or physically motivated ans\"atze \cite{PirisConcurrent}.
Nevertheless, significant progress has been made in developing approximate functionals, including the Müller functional \cite{MULLER_1984}, the power functional \cite{Sharma2008}, the BBC functionals \cite{Baerends_2005}, the family of Piris natural orbital functionals \cite{PNOF5,PNOF6,PNOF7,GNOF}, and, most recently, machine-learned functionals \cite{MachineLearned1RDMFT}.  

Given the exact 1DM functional (or a sufficiently accurate approximation thereto), the ground-state energy for an $N$-electron system whose 1-electron operator (containing information about the kinetic energy, atomic locations, and external fields), $h$, can be deduced from the variational principle
\cite{GilbertExtension}:

\begin{eqnarray}
\label{eq:GilbertHK_1}
% E_{g.s.}[h_{v}; N] = \min_{\{\gamma | \gamma^{\dagger} = \gamma, Tr[\gamma] = N, 0 \preceq \gamma \}} \{  \Tr[h_{v} \gamma] + V_{ee}[\gamma] \} \min_{\left \{{\gamma \Big | \substack{\gamma^{\dagger} = \gamma, 
% 0 \preceq \gamma ,  \\ \text{Tr}[\gamma ]= N
% }}\right \}}
E_{g.s.}[h; N] =\min_{\left \{{ \gamma \Big | \substack{\gamma^{\dagger} = \gamma\\ 
0 \preceq \gamma \preceq 1  \\ \text{Tr}[\gamma ]= N
}} \right\}}  \{  \Tr[ h \gamma] + V_{ee}[\gamma] \} \nonumber \\
\end{eqnarray}
The restrictions on the variational domain ensure that the 1DM, denoted $\gamma$, is ensemble-$N$-representable.\cite{ColemanVariationandNrep} A 1DM is ensemble $N$-representable if and only if  there exists an $N$-electron mixed state, 
\begin{equation}
\Gamma = \sum_i p_i \ket{\Psi_i}\bra{\Psi_i}
\end{equation}
that has the specified 1DM,
\begin{align}
\label{eq:define_Nrep_1DM}
\gamma[\{{p_i, \Psi_i}\}] = \sum_i p_i \sum_{qr} \langle \Psi_i |  q^{\dagger}  r | \Psi_i \rangle | q \rangle \langle r |.
\end{align}
The $p_i$ are the ensemble weights for the $N$-electron wavefunctions $\ket{\Psi_i}$. {Here $q^{\dagger}$ and  $r$ denote the second-quantized operators that create and annihilate elements of an orthonormal basis for the one-electron Hilbert space, denoted $| q\rangle$ and $|r \rangle $ .}

In Eq. (\ref{eq:GilbertHK_1}), $V_{ee}[\gamma] $ is the electron-electron repulsion energy, expressed as a universal functional of the 1DM. The exact functional can be defined, for example, by the Levy-Valone constrained search,\cite{LevyValoneLevy, LevyValoneValone}
\begin{equation}
\label{eq:constrained_search_LV}
V_{ee}^{LV}[\gamma] = \min_{\Set{\Gamma|\Gamma \rightarrow \gamma}} \text{Tr}[V_{ee}\Gamma]
\end{equation}
where the $\Gamma \rightarrow \gamma$ is a concise notation indicating that the domain of the minimization is constrained to $N$-electron density matrices that satisfy Eq. (\ref{eq:define_Nrep_1DM}).

Known properties of the exact functional can be used to constrain what values of $V_{ee}$ are acceptable for a given $\gamma$. In this paper, we are concerned with the $N$-representability of $V_{ee}[{\gamma}]$, in analogy to discussions of $N$-representability of density functionals\cite{mori-sanchezManyelectronSelfinteractionError2006,
ludenaFunctionalNrepresentability2matrix2013,
ludenaNrepresentabilityUniversalityFrho2009,
kryachkoFormulationNRepresentableUpsilonRepresentable1991,
ludenaHohenbergKohnShamVersionDFT2004,AyersNrepfDFTFs,cuevas-saavedraSymmetricNonLocal2012}. {While every $N$-representable density functional, $V_{ee}[\rho]$, is formally an $N$-representable  density-matrix functional, $V_{ee}[\rho[\gamma]]$ (recall $\rho(\mathbf{r}) = \gamma(\mathbf{r},\mathbf{r})$), our aim is to develop a general mathematical framework for deducing N-representability conditions on $\gamma$.
}
{We added a formal definition of what it means for a functional to be $N$-representable near the top of page 2. To wit:
\begin{definition}[N representable functional]
A functional $V_{ee}[\gamma]: \gamma \to V_{ee}$ is \emph{ensemble-$N$-representable} if there exists at least one $N$-electron state, $\Gamma$, such that simultaneously equation (\ref{eq:define_Nrep_1DM}) yields $\gamma$ and, simultaneously,
\begin{align}
\label{eq:define_Nrep_Functional}
V_{ee}[\{{p_i, \Psi_i}\}] = \sum_i p_i \langle \Psi_i | V_{ee} | \Psi_i \rangle,
\end{align}
\end{definition}}

Approximate functionals are typically non-$N$-representable.\cite{Donnelly_1979, HarrimanAndHerbert,cioslowskiVariationalDensityMatrix2002} To our knowledge, the only $V_{ee}[{\gamma}]$ functionals that have been 
designed with $N$-representability conditions in mind are the Piris functionals,\cite{Piris_2006,PirisBenchmark,Piris_2024,Piris_2024b} wherein the 
two-electron reduced density matrix, $\Gamma_2$, is written a functional of ${\gamma}$, so that
\begin{equation}
V_{ee}[\gamma] = \text{Tr}[V_{ee} \Gamma_2[{\gamma}]]
\end{equation}
Then enforcing \textit{approximate} $N$-representability conditions on $\Gamma_2$ ensures \textit{approximate} $N$-representability of $V_{ee}[{\gamma}]$. {We note that alternative approaches have recently been explored, including frameworks that incorporate information from both the one- and two-particle density matrices, thereby which can make it easier to formulate (approximately) $N$-representable functionals \cite{gebauerWellscalingNaturalOrbital2016,ayersDensityfunctionalTheoryAdditional2009b,Maziotti2023,senjeanReducedDensityMatrix2022}.}
In this paper, we present the \textit{exact} conditions that $V_{ee}[{\gamma}]$ must satisfy to be ensemble-$N$-representable.
%
% to the complexity of enforcing $N$-representability conditions—the requirement that the 1DMF corresponds to a physically valid N-electron wavefunction \cite{PirisBenchmark}. This constraint, formalized by Coleman in terms of ensemble $N$-representability (the 1DM must derive from a mixed-state density operator), introduces additional challenges in functional construction, as violations can lead to unphysical results \cite{ColemanVariationandNrep, Donnelly_1979, HarrimanAndHerbert}. Thus, while 1DMFT offers a promising alternative to DFT for capturing strong correlation effects, its practical success hinges on further advances in both functional approximations and the rigorous incorporation of $N$-representability \cite{Piris_2024b, Julia2023,Schilling2018,Schilling2019,liebert2025refiningensemblenrepresentabilityonebody}.
%
To this end, in section \ref{sec:sets}, we define abstract spaces and their properties so that we can confidently leverage theorems from functional analysis. Then, in Section \ref{sec:theorems}, we establish necessary and sufficient conditions for $V_{ee}[{\gamma}]$ to be $N$-representable and a bivariational principle for the ground-state energy. In section \ref{sec:numerics} we provide a numerical demonstration of the $N$-representability conditions and explicitly demonstrate that the Hartree-Fock $V_{ee}$ functional is \textit{not} $N$-representable.

\section{Sets and Spaces}
\label{sec:sets}
Let $\mathcal{B}_{HS}(\mathcal{H})$ denote the space of bounded linear operators on the  one-electron Hilbert space, $\mathcal{H}$. The ensemble-$N$-representable 1DMs are a closed, convex subset of $\mathcal{B}_{HS}(\mathcal{H})$.\cite{ColemanVariationandNrep} Similarly, the real numbers, $\mathbb{R}$, define a one-dimensional Hilbert space. For the repulsive Coulomb interaction, ensemble-$N$-representable $V_{ee}$ are a closed, convex subset of $\mathbb{R}$, namely the nonnegative real numbers, $\mathbb{R}_{\geq 0}$. However, our analysis does not use properties of the Coulomb interaction, and is easily extended to near-arbitrary interparticle repulsions.  

One-electron reduced density matrix functionals  are sets of ordered pairs of functional value and 1DM $(V_{ee}, {\gamma})$, hence they are embedded in the Cartesian product space of the ambient space of both previously described sets.
\begin{align}
    \mathcal{V} = \left\{ \left(W, \eta \right) \Bigm| W \in \mathbb{R},\; \eta \in \mathcal{B}_{HS}(\mathcal{H}) \right\}
\end{align}
The dual space of a cartesian product of finitely many vector spaces is the direct sum of the dual spaces of the individual primal spaces. By virtue of Riesz Representation Theorem, the dual-space of a Hilbert space is isometrically isomorphic to the Hilbert space itself \cite{FuncAna}. 
% \begin{eqnarray}
%   \mathcal{V}^{\ast} \cong   (\mathbb{R} \oplus \mathcal{B}_{HS}(\mathcal{H}))
% \end{eqnarray}

Let $\mathcal{F}_{N} \subset \mathcal{V}$ denote the set of ensemble-$N$-representable $V_{ee}[{\gamma}]$. 
An element of the ambient space, $(W, \eta)$, is ensemble-$N$-representable if $\eta$ and $W$ simultaneously satisfy Eqs. (\ref{eq:define_Nrep_1DM}) and (\ref{eq:define_Nrep_Functional}), respectively. I.e., 
\begin{widetext}
\begin{eqnarray}\label{eq:FN_1DMF}
\mathcal{F}_{N}
  = \Bigl\{\, (W,\eta)\;\Big|\;
      \exists\,\{p_i,\Psi_i\}_{i=1}^{N}\!
      :\;
      \eta = \gamma\!\bigl[\{p_i,\Psi_i\}\bigr],\;
      W   = V_{ee}\!\bigl[\{p_i,\Psi_i\}\bigr]
    \Bigr\}.
\end{eqnarray}
\end{widetext}
$\mathcal{F}_{N}$ is closed and convex since it is defined by a linear map from the (closed and convex) set of $N$-electron density matrices, $\Gamma$.
 
\section{Theorems}
\label{sec:theorems}
\begin{theorem}
A 1DM functional, $(V_{ee}, {\gamma})$, is ensemble-N-representable if and only if, for all one-electron operators $h$, and interaction strengths, $\lambda \in (-\infty,\infty)$,  

\begin{equation}
\label{eq:condition_theorem_I}
\operatorname{Tr}[h \gamma] + \lambda V_{ee}[\gamma] \geq E_{g.s.}^\lambda[N, h ],
\end{equation}
where $E_{g.s.}^\lambda[N,  h ]$ is the $N$-electron ground-state energy for the Hamiltonian $H = h + \lambda V_{ee}$ and $N=\Tr[\gamma]$.

% This means that for any external potential $v$ and any one-electron reduced density matrix $\gamma$, the the left-hand side must be greater than the ground-state $E_{g.s.}$ energy of that system. 
\end{theorem}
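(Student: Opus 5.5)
The plan is to prove the two directions separately. The forward direction is essentially the variational principle. The converse is a separating-hyperplane argument in the ambient space $\mathcal{V}$, using that $\mathcal{F}_N$ is closed and convex, as established in Section \ref{sec:sets}.

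\emph{Necessity.} Suppose $(V_{ee},\gamma)\in\mathcal{F}_N$. Then there is an $N$-electron ensemble $\Gamma=\sum_i p_i\ket{\Psi_i}\bra{\Psi_i}$ that reproduces $\gamma$ through Eq.~(\ref{eq:define_Nrep_1DM}) and $V_{ee}$ through Eq.~(\ref{eq:define_Nrep_Functional}). By linearity, for any $h$ and any real $\lambda$,
\begin{equation*}
\Tr[h\gamma]+\lambda V_{ee}[\gamma]=\sum_i p_i\langle\Psi_i|h+\lambda V_{ee}|\Psi_i\rangle .
\end{equation*}
Each expectation value is at least $E^\lambda_{g.s.}[N,h]$ by the Rayleigh--Ritz principle. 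Since the $p_i$ are nonnegative and sum to one, Eq.~(\ref{eq:condition_theorem_I}) follows. This also holds for $\lambda<0$, since nothing about the sign of the interaction is used.

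\emph{Sufficiency.} I will argue by contraposition. Suppose $(W,\eta)\notin\mathcal{F}_N$ with $\Tr[\eta]=N$. Because $\mathcal{F}_N$ is closed and convex in the Hilbert space $\mathbb{R}\oplus\mathcal{B}_{HS}(\mathcal{H})$, the Hahn--Banach/Hilbert projection separation theorem gives a continuous linear functional separating the point from the set strictly. By Riesz, and using the dual-space identification made in Section \ref{sec:sets}, this functional is a pair $(\lambda,h)$ with $\lambda\in\mathbb{R}$ and $h$ a (self-adjoint, after symmetrizing) one-electron operator, such that
\begin{equation*}
\Tr[h\eta]+\lambda W<\inf_{(W',\eta')\in\mathcal{F}_N}\bigl(\Tr[h\eta']+\lambda W'\bigr).
\end{equation*}
The infimum on the right is the infimum of $\Tr[(h+\lambda V_{ee})\Gamma]$ over $N$-electron density matrices $\Gamma$, which is $E^\lambda_{g.s.}[N,h]$. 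Hence the point violates Eq.~(\ref{eq:condition_theorem_I}) for this $(h,\lambda)$. The sign of $\lambda$ is not controlled by the separation, which is why the theorem must allow $\lambda\in(-\infty,\infty)$.

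\emph{Main obstacle.} The delicate step is making the separation rigorous, and I expect two issues.

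The first is the case where the separating functional has $\lambda=0$, or more generally where the inequality involves only the $\eta$-component. If $\eta$ itself is not $N$-representable, a hyperplane with $\lambda=0$ is exactly what should appear, and the condition with $\lambda=0$ reduces to Coleman's criterion. So this case is harmless.

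The second is analytic. In infinite dimensions one needs $\mathcal{F}_N$ closed in the chosen topology and the ground-state energy to be well defined, meaning bounded below, for every admissible $h$ and $\lambda$. For negative $\lambda$ with Coulomb $V_{ee}$ this may fail or be $-\infty$, which only strengthens the inequality. My first attempt would be to work in a finite one-electron basis, where $\mathcal{F}_N$ is a compact convex set as the linear image of the compact set of $N$-electron density matrices and everything is elementary. I would then remark that the infinite-dimensional case follows from the closedness and convexity already asserted for $\mathcal{F}_N$, with $h$ restricted to bounded operators.
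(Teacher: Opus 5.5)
Your proposal is correct and follows essentially the same route as the paper. Necessity comes from the variational (Rayleigh--Ritz) principle over ensembles, and sufficiency comes from strictly separating a non-representable point from the convex set $\mathcal{F}_N$ and identifying the infimum of the separating functional over $\mathcal{F}_N$ with $E^\lambda_{g.s.}[N,h]$. Your version is slightly more careful than the paper's in two places: you keep the strict inequality against the infimum, which strict separation from a closed set guarantees, and you justify closedness through compactness in a finite basis rather than treating any linear image of a closed set as closed.
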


\begin{proof}
% All functionals that are $N$-representable fulfill the theorem, though their  might be functionals that fulfill the theorem that are non $N$-representable. 

Consider the definition of the ground-state energy as a variational minimization problem over ensembles:

\begin{eqnarray}
    E_{g.s.}^\lambda[N, h ] &=& \min_{p_i, \Psi_i}  \sum_{i} p_i \left \langle \Psi_i \Bigg | h + \lambda V_{ee} \Bigg | \Psi _i \right \rangle  \nonumber \\
    &=& \min_{p_i, \Psi_i} \left( \text{Tr}[ h  \gamma[\{p_i, \Psi_i\}]+\lambda V_{ee}[\{p_i, \Psi_i\}]  \right) \nonumber \\
\end{eqnarray}
% where the definition of $N$-representable functional \ref{eq:define_Nrep_Functional} and density matrix \ref{eq:define_Nrep_1DM} was inserted.
Referring to the definition of $\mathcal{F}_N$, the second line can be rewritten as the variational minimization over the set of $N$-representable functionals,
\begin{equation}\label{eq:variational_Fn}
E_{g.s.}^\lambda[N,  h ] = \min_{(V_{ee}, \gamma)\in \mathcal{F}_N} \text{Tr}[h  \gamma] + \lambda V_{ee}[ \gamma]
\end{equation}
Thus Eq. (\ref{eq:condition_theorem_I}) is necessary for $(V_{ee}, {\gamma}) \in \mathcal{F}_N$.

% The pair $(V_{ee}[\{p_i, \Psi_i\}], \gamma[\{p_i, \Psi_i\}])$ minimizing the expression, is the minimizing $N$-representable pair by construction. A pair lowering the energy would lie outside the set of $N$-representable pairs, and must therefore be non-$N$-representable. Hence, equation \ref{eq:condition_theorem_I} is a necessary condition for $N$-representability.

% % Any pair that fulfills the condition is $N$-representable, although there may be more being  $N$-representable, that don't fulfill the condition. 

To show sufficiency of the condition, choose a trial $(\widetilde{V}_{ee}, \widetilde {\gamma}) \in \mathcal{V}$ that is not $N$-representable. Since the $\mathcal{F}_N$ is a convex subset of a Hilbert space, the hyperplane separation theorem guarantees that there exists an element of the dual space, represented by $(\lambda,g)$, that separates the convex set $\mathcal{F}_{N}$ from the chosen point\cite{FuncAna}: 
\begin{eqnarray}
\label{eq:plane}
    % \Phi[h_{\omega}, k]((V_{ee}[\gamma], \gamma) )= \Tr[h_{w} \gamma] + k V_{ee}[\gamma] 
\text{Tr}[g  \gamma] + \lambda  V_{ee}[\gamma] > \text{Tr}[g \tilde \gamma] + \lambda \tilde{V}_{ee}
\end{eqnarray}
% Note: I use Vee[\gamma] on the left because I mean multiple points and $Vee$ on the right becaus I mean one point.
% Note: Why larger why not smaller, apart from that it would be inconvenient.
for every $(V_{ee}, {\gamma})\in \mathcal{F}_N$. Minimizing the left-hand-side over all $N$-representable functions (recall Eq. (\ref{eq:variational_Fn})) gives  
\begin{equation}
\label{eq:theor_I_c}
  E_{g.s.}^\lambda[N,  g ] >  \Tr[ g {\tilde{\gamma}}] + \lambda \tilde{V}_{ee}. 
\end{equation}
Therefore Eq. (\ref{eq:condition_theorem_I}) is also sufficient for $(V_{ee}, {\gamma}) \in \mathcal{F}_N$.
\end{proof}

\begin{remark}
    In the context of variational 1DMFT calculations, Eq. (\ref{eq:GilbertHK_1}), the theorem indicates that $N$-representable functionals never give an answer below the true ground-state energy. By contrast, a non-$N$-representable functional will always give an answer below the true energy for some system, albeit possibly a system with an attractive interparticle interaction ($\lambda < 0$).
\end{remark}

\begin{remark}
In practice, it suffices to impose Eq. (\ref{eq:condition_theorem_I}) only for $\lambda = \pm 1$,
\begin{equation}
\label{eq:condition_theorem_Ib}
\text{Tr}[h \gamma] \pm V_{ee}[\gamma] \geq E_{g.s.}^{\pm 1}[N, h ],
\end{equation}
This suffices because Eq. (\ref{eq:plane}) can be rewritten as:
\begin{equation}
\text{Tr}[g'\gamma] + \text{sgn}(\lambda)  V_{ee}[\gamma] > \text{Tr}[g' \tilde \gamma] + \text{sgn}(\lambda) \tilde{V}_{ee}
\end{equation}
where $g' = |\lambda|^{-1} g$ and $\text{sgn}(\lambda)$ is the sign of the interaction potential.
\end{remark}

The variational principle in Eq. (\ref{eq:variational_Fn}) is not especially practical because it requires a complete characterization of the set of $N$-representable functionals, $\mathcal{F}_N$. Our second theorem establishes a bivariational principle whereby one can impose only a subset of the necessary conditions for $N$-representability---even just a single condition.
\begin{theorem}
Let $\mathcal{H}^\lambda[N,\tilde{h}]$ denote the half-space of candidate functionals, $(V_{ee},\gamma)$, that satisfy Eq. (\ref{eq:condition_theorem_I}) for a given 1-body potential $\tilde{h}$ and interaction strength $\lambda$. The $N$-electron ground-state energy, \( E_{g.s.}[h; N] \), can be obtained by the bivariational principle:  
\begin{equation}\label{eq:theorem_II}
    E_{g.s.}^\lambda[h;N] = \max_{\tilde{h}\in \mathcal{B}_{HS}(\mathcal{H})} \min_{(V_{ee},\gamma) \in \mathcal{H}^\lambda[N,\tilde{h}]}\left( \Tr[h \gamma] + \lambda V_{ee}[\gamma] \right).
\end{equation}
\end{theorem}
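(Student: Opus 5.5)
The plan is to prove two inequalities, ``$\le$'' and ``$\ge$'', between $E^\lambda_{g.s.}[h;N]$ and the max--min in Eq.~(\ref{eq:theorem_II}). Throughout, $N=\Tr[\gamma]$ is held fixed, and the candidates are pairs $(V_{ee},\gamma)\in\mathcal{V}$ with $\Tr[\gamma]=N$. For a fixed $\tilde h$ write
\begin{equation*}
m(\tilde h)=\inf_{(V_{ee},\gamma)\in\mathcal{H}^\lambda[N,\tilde h]}\left(\Tr[h\gamma]+\lambda V_{ee}[\gamma]\right),
\end{equation*}
where $\mathcal{H}^\lambda[N,\tilde h]$ is the closed half-space $\{(W,\eta): \Tr[\tilde h\eta]+\lambda W\ge E^\lambda_{g.s.}[N,\tilde h]\}$.

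\textbf{Upper bound.} By Theorem 1 (necessity), every $N$-representable pair satisfies Eq.~(\ref{eq:condition_theorem_I}) for every $\tilde h$. Hence $\mathcal{F}_N\subseteq\mathcal{H}^\lambda[N,\tilde h]$ for all $\tilde h$, and minimizing over a larger set can only lower the value. Combining this with Eq.~(\ref{eq:variational_Fn}) gives
\begin{equation*}
m(\tilde h)\le \min_{\mathcal{F}_N}\left(\Tr[h\gamma]+\lambda V_{ee}\right)=E^\lambda_{g.s.}[N,h]
\end{equation*}
for every $\tilde h$. Therefore $\sup_{\tilde h} m(\tilde h)\le E^\lambda_{g.s.}[N,h]$.

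\textbf{Attainment at $\tilde h=h$.} Take $\tilde h=h$. Every element of $\mathcal{H}^\lambda[N,h]$ satisfies, by the definition of that half-space, $\Tr[h\gamma]+\lambda V_{ee}\ge E^\lambda_{g.s.}[N,h]$, so $m(h)\ge E^\lambda_{g.s.}[N,h]$. The ground-state ensemble $\Gamma$ of $h+\lambda V_{ee}$ yields a pair in $\mathcal{F}_N\subseteq\mathcal{H}^\lambda[N,h]$ whose value is exactly $E^\lambda_{g.s.}[N,h]$. Hence the inner minimum exists and equals $E^\lambda_{g.s.}[N,h]$. This shows that the outer supremum is attained, so it is a maximum, and equality holds in Eq.~(\ref{eq:theorem_II}).

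\textbf{Main obstacle.} The delicate point is whether the inner minimum over $\mathcal{H}^\lambda[N,\tilde h]$ is well defined for $\tilde h\neq h$. Since $\mathcal{H}^\lambda[N,\tilde h]$ is an unbounded half-space, the linear objective $\Tr[h\gamma]+\lambda V_{ee}$ is generically unbounded below on it. The infimum is then $-\infty$, unless $(h,\lambda)$ is a nonnegative multiple of $(\tilde h,\lambda)$ on the relevant affine slice $\Tr[\gamma]=N$, in which case it is the finite bound $E^\lambda_{g.s.}[N,h]$. This does not damage the argument. With the convention $\min=\inf\in[-\infty,\infty)$, the upper bound and the attainment step both still hold, because the outer maximum is attained at $\tilde h=h$ (or at $h$ plus a multiple of the identity, which only shifts the energy by a constant times $N$). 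I would state this convention explicitly. I would also remark that if the candidate $\gamma$ is restricted to ensemble-$N$-representable 1DMs, $0\preceq\gamma\preceq 1$, and $V_{ee}$ is bounded, then the inner problem becomes a finite linear program for each $\tilde h$, and the same two inequalities apply verbatim.
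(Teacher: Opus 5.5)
Your proof is correct and follows the same route as the paper. The inclusion $\mathcal{F}_N\subseteq\mathcal{H}^\lambda[N,\tilde h]$ makes every inner minimum a lower bound on $E^\lambda_{g.s.}[N,h]$, and the choice $\tilde h=h$ attains that bound. Your version is more careful than the paper's, which asserts equality ``only when $\tilde h=h$'' and leaves the $-\infty$ issue to a remark: you exhibit the minimizer at $\tilde h=h$ explicitly, adopt the $\inf\in[-\infty,\infty)$ convention, and correctly observe that $\tilde h=h+cI$ also attains the maximum on the slice $\Tr[\gamma]=N$.
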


\begin{proof}
    
For any trial one body Hamiltonian  $\tilde{h}$, Eq. \ref{eq:condition_theorem_I} defines a necessary criterion for $N$-representability. By minimizing the energy under this constraint, we obtain a lower bound on the exact ground-state energy:
\begin{equation}\label{eq:theorem_II_prf}
    E_{g.s.}^\lambda[h;N] \geq \min_{(V_{ee}, \gamma) \in \mathcal{H}^\lambda(N,\tilde{h})} \left( \Tr[h \gamma] + \lambda V_{ee}[\gamma] \right). 
\end{equation}
with equality only when $\tilde{h} = h$. As we wish to find the tightest possible lower bound, we maximize over $\tilde{h}$, leading to Eq. (\ref{eq:theorem_II}).
\end{proof}

\begin{remark}
    In practice, the right-hand-side of Eq. (\ref{eq:theorem_II_prf}) is minus infinity, so in practice one wishes to include additional necessary conditions for functional $N$-representability when employing the max-min principle established by Theorem 2.
\end{remark}

\section{Example}\label{sec:numerics}
We consider the one-electron reduced density matrix for 2 electrons in 2 spatial orbitals in its natural orbital (eigen)basis, $\gamma_{ij} = \delta_{ij} n_i$. We consider the case where all spins are paired, so $0 \le n_i \le 2$. For this small system we can explicitly construct a lower bound. Specifically, for a given $\gamma$, the smallest value of $x \in \mathbb{R}_{\ge0}$ for which $(x,\gamma)$ is $N$-representable is
\begin{equation}\label{eq:lower_bound}
\begin{split}
    V_{ee}[\gamma] &= \min_{\Set{x|(x,\gamma)\in \mathcal{F}_N}} x \\
    &=\min_{\{p_i,\Psi_i\} \rightarrow \gamma} \sum_i p_i \langle \Psi_i | V_{ee} | \Psi_i \rangle,
\end{split}
\end{equation}
where the constraint in the second line indicates that Eq. (\ref{eq:define_Nrep_1DM}) is satisfied. The upper bound for $V_{ee}[\gamma]$ is obtained from the lower-bound for an attractive Coulomb interaction,
\begin{equation}\label{eq:upper_bound}
\begin{split}
    V_{ee}^{u.b.}[\gamma] &= \max_{\Set{x|(x,\gamma)\in \mathcal{F}_N}} x \\
    &=-\min_{\{p_i,\Psi_i\} \rightarrow \gamma} \sum_i p_i \langle \Psi_i | -V_{ee} | \Psi_i \rangle,
\end{split}
\end{equation}
{In our numerical work we do not implement the constrained-search functionals \cite{LevyValoneValone} \ref{eq:lower_bound} and \ref{eq:upper_bound} directly, but instead use the equivalent Legendre-transform (dual) formulation.\cite{PaulGolden}} 

{As an example of a non-$N$-representable functional, consider the Hartree-Fock functional, which is obtained by approximating the 2RDM as the wedge product of the 1RDMs, with the matrix elements (in spatial basis)
\begin{eqnarray}\label{eq:2rdm_hf}
    ^2\Gamma^{HF}_{pqrs} = n_r n_s (\delta_{pr} \delta_{qs} - \frac{1}{2} \delta_{qr} \delta_{ps}),
\end{eqnarray}
where $0 \le n_r\le 2$ are the occupation numbers of the spatial natural orbitals. As seen in Figure \ref{fig:numercial_data}, the Hartree-Fock $V_{ee}$ functional is an upper bound to $V_{ee}$ for repulsive interactions, in agreement with Lieb's result \cite{LiebHartreeFock}. However, it is not an upper bound for attractive interactions, so the Hartree-Fock functional is only $N$-representable for density matrices that are sufficiently close to idempotent. This is unsurprising and can be rationalized because the 2RDM in Eq. \ref{eq:2rdm_hf} is non-$N$-representable for $n_r \notin\{0,2\}$ because it violates the trace condition, 
\begin{eqnarray}
    \sum_{rs} {^2\Gamma}^{HF}_{rs,rs} = \left (\sum n_r \right )^2 - \frac{1}{2} \sum n_r^2 \ge N(N-1)
\end{eqnarray} 
For example, at the center of Figure \ref{fig:numercial_data} all the occupation numbers are 1 and the trace is 3. Because the Hartree-Fock functional effectively overestimates the number of electron pairs, it can underestimate the energy for attractive pairing interactions.}

\begin{figure}
    \centering
    \includegraphics[width=0.95\linewidth]{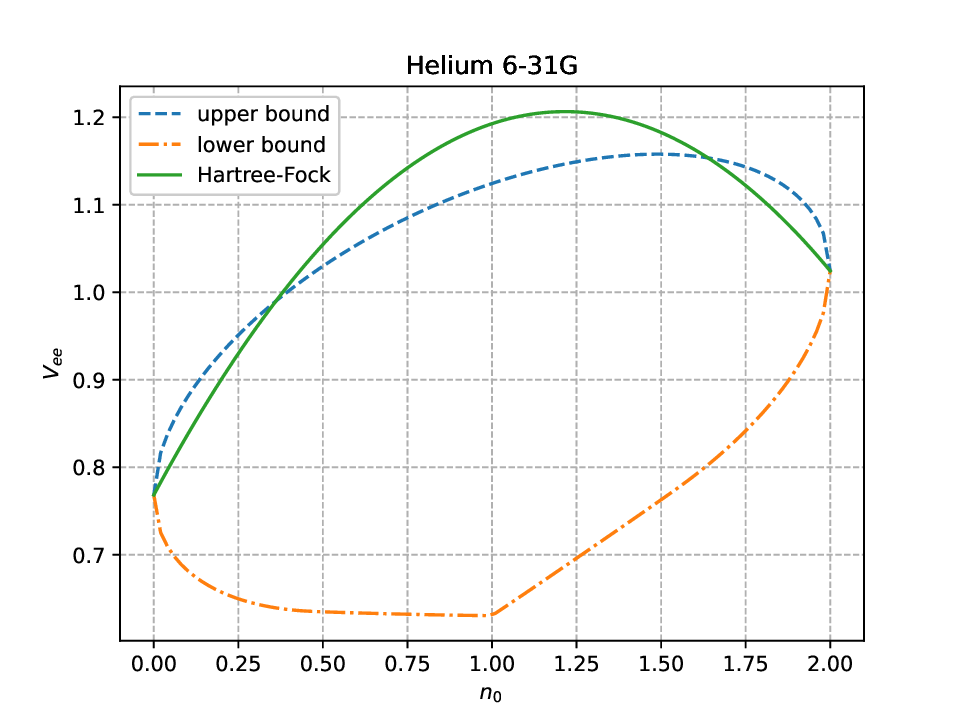}
    \caption{Upper and lower bound on the interparticle repulsion energy functional as a function of the natural orbital occupation number.\cite{note}}
    \label{fig:numercial_data}
\end{figure}

\section{Conclusion}
This paper aims to elucidate the $N$-representability problem in 1DMFT. We present necessary and sufficient conditions for the $N$-representability of the universal functional. By establishing rigorous representability conditions, our results lay the groundwork for developing more physically grounded and systematically improvable 1DMFT functionals.

While $N$-representability is necessary for the exact functional, it does not, by itself, ensure high accuracy in approximate functionals. Furthermore, verifying the conditions of both statements requires knowledge of the exact ground-state energy, rendering their evaluation an NP-hard problem except for small model systems and special cases\cite{Schuch2009}. Notably, the Hartree-Fock functional satisfies these conditions and thus provides an upper bound to the true ground-state energy \cite{LiebHartreeFock} for Coulomb repulsion, illustrating a rare tractable case. This is not true for Coulomb attraction, where the Hartree-Fock functional is not $N$-representable. It would be interesting to find a generalization of Lieb's argument from \cite{LiebHartreeFock} to attractive interactions, as it might give us a deeper understanding of approximations to the universal functional, leading to new functional development.

% Indeed, the only functional we are aware of that has been proven to be ensemble-$N$-representable for arbitrary $\gamma$ is the Hartree-Fock functional.\cite{liebVariationalPrincipleManyFermion1981} 

\section{Acknowledgements} The authors acknowledge support from the Canada Research Chairs (CRC-2022-00196), NSERC (Discovery RGPIN/06707-2024 and Alliance ALLRP/592521-2023), and the Digital Research Alliance of Canada. Moreover, we would like to thank Dr. Julia Liebert for many stimulating discussions.

% \printbibliography[title={Project- and subject-related list of publications}, heading=bibnumbered]

\bibliographystyle{apsrev4-1}
\bibliography{library}

@article{AyersNrepfDFTFs,
  title = {Necessary and sufficient conditions for the $N$-representability of density functionals},
  author = {Ayers, Paul W. and Liu, Shubin},
  journal = {Phys. Rev. A},
  volume = {75},
  issue = {2},
  pages = {022514},
  numpages = {12},
  year = {2007},
  month = {Feb},
  publisher = {American Physical Society},
  doi = {10.1103/PhysRevA.75.022514},
}

@article{PaulGolden,
    author = {Ayers, Paul W. and Golden, Sidney and Levy, Mel},
    title = {Generalizations of the Hohenberg-Kohn theorem: I. Legendre Transform Constructions of Variational Principles for Density Matrices and Electron Distribution Functions},
    journal = {J. Chem. Phys.},
    volume = {124},
    number = {5},
    pages = {054101},
    year = {2006},
    month = {02},
    issn = {0021-9606},
    doi = {10.1063/1.2006087},
}

@Article{appl_rep_cond_2,
author ="Teale, Andrew M. and Helgaker, Trygve and Savin, Andreas and Adamo, Carlo and Aradi, Bálint and Arbuznikov, Alexei V. and Ayers, Paul W. and Baerends, Evert Jan and Barone, Vincenzo and Calaminici, Patrizia and Cancès, Eric and Carter, Emily A. and Chattaraj, Pratim Kumar and Chermette, Henry and Ciofini, Ilaria and Crawford, T. Daniel and De Proft, Frank and Dobson, John F. and Draxl, Claudia and Frauenheim, Thomas and Fromager, Emmanuel and Fuentealba, Patricio and Gagliardi, Laura and Galli, Giulia and Gao, Jiali and Geerlings, Paul and Gidopoulos, Nikitas and Gill, Peter M. W. and Gori-Giorgi, Paola and Görling, Andreas and Gould, Tim and Grimme, Stefan and Gritsenko, Oleg and Jensen, Hans Jørgen Aagaard and Johnson, Erin R. and Jones, Robert O. and Kaupp, Martin and Köster, Andreas M. and Kronik, Leeor and Krylov, Anna I. and Kvaal, Simen and Laestadius, Andre and Levy, Mel and Lewin, Mathieu and Liu, Shubin and Loos, Pierre-François and Maitra, Neepa T. and Neese, Frank and Perdew, John P. and Pernal, Katarzyna and Pernot, Pascal and Piecuch, Piotr and Rebolini, Elisa and Reining, Lucia and Romaniello, Pina and Ruzsinszky, Adrienn and Salahub, Dennis R. and Scheffler, Matthias and Schwerdtfeger, Peter and Staroverov, Viktor N. and Sun, Jianwei and Tellgren, Erik and Tozer, David J. and Trickey, Samuel B. and Ullrich, Carsten A. and Vela, Alberto and Vignale, Giovanni and Wesolowski, Tomasz A. and Xu, Xin and Yang, Weitao",
title  ="DFT exchange: sharing perspectives on the workhorse of quantum chemistry and materials science",
journal  ="Phys. Chem. Chem. Phys.",
year  ="2022",
volume  ="24",
issue  ="47",
pages  ="28700-28781",
publisher  ="The Royal Society of Chemistry",
doi  ="10.1039/D2CP02827A"}

@article{appl_rep_cond_1,
    author = {Perdew, John P. and Ruzsinszky, Adrienn and Tao, Jianmin and Staroverov, Viktor N. and Scuseria, Gustavo E. and Csonka, Gábor I.},
    title = {Prescription for the design and selection of density functional approximations: More constraint satisfaction with fewer fits},
    journal = {J. Chem. Phys.},
    volume = {123},
    number = {6},
    pages = {062201},
    year = {2005},
    month = {08},
    issn = {0021-9606},
    doi = {10.1063/1.1904565},
}

@article{LiebOnDFT,
author = {Lieb, Elliott H.},
title = {Density functionals for coulomb systems},
journal = {Int. J. Quantum Chem.},
volume = {24},
number = {3},
pages = {243-277},
doi = {https://doi.org/10.1002/qua.560240302},
year = {1983}
}

@book{FuncAna,
  title={Functional Analysis: An Introduction},
  author={Larsen, R.},
  isbn={9780608169361},
  lccn={72090375},
  series={Pure Appl. Math. Ser.},
  year={1973},
  publisher={Marcel Dekker, Incorporated}
}

@article{GilbertExtension,
  title = {Hohenberg-Kohn theorem for nonlocal external potentials},
  author = {Gilbert, T. L.},
  journal = {Phys. Rev. B},
  volume = {12},
  issue = {6},
  pages = {2111--2120},
  numpages = {0},
  year = {1975},
  month = {Sep},
  publisher = {American Physical Society},
  doi = {10.1103/PhysRevB.12.2111},
}

@article{HarrimanAndHerbert,
    author = {Herbert, John M. and Harriman, John E.},
    title = {N-representability and variational stability in natural orbital functional theory},
    journal = {J. Chem. Phys.},
    volume = {118},
    number = {24},
    pages = {10835-10846},
    year = {2003},
    month = {06},
    issn = {0021-9606},
    doi = {10.1063/1.1574787},
}

@article{LevyValoneValone,
    author = {Valone, Steven M.},
    title = {Consequences of extending 1‐matrix energy functionals from pure–state representable to all ensemble representable 1 matrices},
    journal = {J. Chem. Phys.},
    volume = {73},
    number = {3},
    pages = {1344-1349},
    year = {1980},
    month = {08},
    issn = {0021-9606},
    doi = {10.1063/1.440249},
}

@article{LevyValoneLevy,
author = {Mel Levy },
title = {Universal variational functionals of electron densities, first-order density matrices, and natural spin-orbitals and solution of the <i>v</i>-representability problem},
journal = {Proc. Natl. Acad. Sci.},
volume = {76},
number = {12},
pages = {6062-6065},
year = {1979},
doi = {10.1073/pnas.76.12.6062},
URL = {https://www.pnas.org/doi/abs/10.1073/pnas.76.12.6062},
eprint = {https://www.pnas.org/doi/pdf/10.1073/pnas.76.12.6062}}

@article{ColemanVariationandNrep,
  title = {Structure of Fermion Density Matrices},
  author = {Coleman, A. J.},
  journal = {Rev. Mod. Phys.},
  volume = {35},
  issue = {3},
  pages = {668--686},
  numpages = {0},
  year = {1963},
  month = {Jul},
  publisher = {American Physical Society},
  doi = {10.1103/RevModPhys.35.668},
}

@article{Loewdin_I,
  title = {Quantum Theory of Many-Particle Systems. I. Physical Interpretations by Means of Density Matrices, Natural Spin-Orbitals, and Convergence Problems in the Method of Configurational Interaction},
  author = {L\"owdin, Per-Olov},
  journal = {Phys. Rev.},
  volume = {97},
  issue = {6},
  pages = {1474--1489},
  numpages = {0},
  year = {1955},
  month = {Mar},
  publisher = {American Physical Society},
  doi = {10.1103/PhysRev.97.1474},
}

@article{Loewdin_II,
  title = {Quantum Theory of Many-Particle Systems. II. Study of the Ordinary Hartree-Fock Approximation},
  author = {L\"owdin, Per-Olov},
  journal = {Phys. Rev.},
  volume = {97},
  issue = {6},
  pages = {1490--1508},
  numpages = {0},
  year = {1955},
  month = {Mar},
  publisher = {American Physical Society},
  doi = {10.1103/PhysRev.97.1490},
}

@article{LiebHartreeFock,
  title = {Variational Principle for Many-Fermion Systems},
  author = {Lieb, Elliot H.},
  journal = {Phys. Rev. Lett.},
  volume = {46},
  issue = {7},
  pages = {457--459},
  numpages = {0},
  year = {1981},
  month = {Feb},
  publisher = {American Physical Society},
  doi = {10.1103/PhysRevLett.46.457},
}

@article{Baerends_2005,
author = {Gritsenko, Oleg and Pernal, Katarzyna and Baerends, Evert},
year = {2005},
month = {06},
pages = {204102},
title = {An improved density matrix functional by physically motivated repulsive corrections},
volume = {122},
journal = {J. Chem. Phys.},
doi = {10.1063/1.1906203}
}

@article{MULLER_1984,
title = {Explicit approximate relation between reduced two- and one-particle density matrices},
journal = {Physics Letters A},
volume = {105},
number = {9},
pages = {446-452},
year = {1984},
issn = {0375-9601},
doi = {https://doi.org/10.1016/0375-9601(84)91034-X},
author = {A.M.K. Müller}
}

@article{Sharma2008,
  title = {Reduced density matrix functional for many-electron systems},
  author = {Sharma, S. and Dewhurst, J. K. and Lathiotakis, N. N. and Gross, E. K. U.},
  journal = {Phys. Rev. B},
  volume = {78},
  issue = {20},
  pages = {201103},
  numpages = {4},
  year = {2008},
  month = {Nov},
  publisher = {American Physical Society},
  doi = {10.1103/PhysRevB.78.201103},
}

@article{Piris_2006,
author = {Piris, Mario},
title = {A new approach for the two-electron cumulant in natural orbital functional theory},
journal = {Int. J. Quantum Chem.},
volume = {106},
number = {5},
pages = {1093-1104},
doi = {https://doi.org/10.1002/qua.20858},
year = {2006}
}

@incollection{Piris_2024,
title = {Chapter Two - Advances in approximate natural orbital functionals: From historical perspectives to contemporary developments},
editor = {Ramon A. Miranda Quintana and John F. Stanton},
series = {Advances in Quantum Chemistry},
publisher = {Academic Press},
volume = {90},
pages = {15-66},
year = {2024},
booktitle = {Novel Treatments of Strong Correlations},
issn = {0065-3276},
doi = {https://doi.org/10.1016/bs.aiq.2024.04.002},
author = {Mario Piris}
}

@article{Piris_2024b,
    author = {Mitxelena, Ion and Piris, Mario},
    title = "{Assessing the global natural orbital functional approximation on model systems with strong correlation}",
    journal = {J. Chem. Phys.},
    volume = {160},
    number = {20},
    pages = {204106},
    year = {2024},
    month = {05},
    issn = {0021-9606},
    doi = {10.1063/5.0207325},
}

@article{PNOF5,
author = {Piris, Mario},
title = {A new approach for the two-electron cumulant in natural orbital functional theory},
journal = {Int. J. Quantum Chem.},
volume = {106},
number = {5},
pages = {1093-1104},
doi = {https://doi.org/10.1002/qua.20858},
year = {2006}
}

@article{PNOF6,
    author = {Piris, M.},
    title = {Interacting pairs in natural orbital functional theory},
    journal = {J. Chem. Phys.},
    volume = {141},
    number = {4},
    pages = {044107},
    year = {2014},
    month = {07},
    issn = {0021-9606},
    doi = {10.1063/1.4890653},
}

@article{PNOF7,
  title = {Global Method for Electron Correlation},
  author = {Piris, Mario},
  journal = {Phys. Rev. Lett.},
  volume = {119},
  issue = {6},
  pages = {063002},
  numpages = {5},
  year = {2017},
  month = {Aug},
  publisher = {American Physical Society},
  doi = {10.1103/PhysRevLett.119.063002},
}

@article{GNOF,
  title = {Global Natural Orbital Functional: Towards the Complete Description of the Electron Correlation},
  author = {Piris, Mario},
  journal = {Phys. Rev. Lett.},
  volume = {127},
  issue = {23},
  pages = {233001},
  numpages = {6},
  year = {2021},
  month = {Dec},
  publisher = {American Physical Society},
  doi = {10.1103/PhysRevLett.127.233001},
}

@Article{PirisConcurrent,
author ="Piris, Mario",
title  ="Exploring the potential of natural orbital functionals",
journal  ="Chem. Sci.",
year  ="2024",
volume  ="15",
issue  ="42",
pages  ="17284-17291",
publisher  ="The Royal Society of Chemistry",
doi  ="10.1039/D4SC05810K"}

@article{Cohen2008,
author = {Aron J. Cohen  and Paula Mori-Sánchez  and Weitao Yang },
title = {Insights into Current Limitations of Density Functional Theory},
journal = {Science},
volume = {321},
number = {5890},
pages = {792-794},
year = {2008},
doi = {10.1126/science.1158722},
URL = {https://www.science.org/doi/abs/10.1126/science.1158722},
eprint = {https://www.science.org/doi/pdf/10.1126/science.1158722}}

@article{Donnelly_1979,
    author = {Donnelly, Robert A.},
    title = "{On a fundamental difference between energy functionals based on first‐ and on second‐order density matrices}",
    journal = {J. Chem. Phys.},
    volume = {71},
    number = {7},
    pages = {2874-2879},
    year = {1979},
    month = {10},
    issn = {0021-9606},
    doi = {10.1063/1.438678},
    comment_eprint = {https://pubs.aip.org/aip/jcp/article-pdf/71/7/2874/18919814/2874\_1\_online.pdf},
}

@Article{PirisBenchmark,
author ="Rodríguez-Mayorga, Mauricio and Ramos-Cordoba, Eloy and Via-Nadal, Mireia and Piris, Mario and Matito, Eduard",
title  ="Comprehensive benchmarking of density matrix functional approximations",
journal  ="Phys. Chem. Chem. Phys.",
year  ="2017",
volume  ="19",
issue  ="35",
pages  ="24029-24041",
publisher  ="The Royal Society of Chemistry",
doi  ="10.1039/C7CP03349D"}

@article{MachineLearned1RDMFT,
    author = {Franco, Lizeth and Bonfil-Rivera, Iván A. and Huan Lew-Yee, Juan Felipe and Piris, Mario and M. del Campo, Jorge and Vargas-Hernández, Rodrigo A.},
    title = {Softmax parameterization of the occupation numbers for natural orbital functionals based on electron pairing approaches},
    journal = {J. Chem. Phys.},
    volume = {160},
    number = {24},
    pages = {244107},
    year = {2024},
    month = {06},
    issn = {0021-9606},
    doi = {10.1063/5.0213719},
}

@article{SavinAC,
    author = {Colonna, François and Savin, Andreas},
    title = {Correlation energies for some two- and four-electron systems along the adiabatic connection in density functional theory},
    journal = {J. Chem. Phys.},
    volume = {110},
    number = {6},
    pages = {2828-2835},
    year = {1999},
    month = {02},
    issn = {0021-9606},
    doi = {10.1063/1.478234},
}

@article{Savin_2003,
author = {Savin, A. and Colonna, F. and Pollet, R.},
title = {Adiabatic connection approach to density functional theory of electronic systems},
journal = {Int. J. Quantum Chem.},
volume = {93},
number = {3},
pages = {166-190},
doi = {https://doi.org/10.1002/qua.10551},
year = {2003}
}

@article{Goerling_1996,
  title = {Generalized Kohn-Sham schemes and the band-gap problem},
  author = {Seidl, A. and G\"orling, A. and Vogl, P. and Majewski, J. A. and Levy, M.},
  journal = {Phys. Rev. B},
  volume = {53},
  issue = {7},
  pages = {3764--3774},
  numpages = {0},
  year = {1996},
  month = {Feb},
  publisher = {American Physical Society},
  doi = {10.1103/PhysRevB.53.3764},
}

@article{SCAN,
  title = {Strongly Constrained and Appropriately Normed Semilocal Density Functional},
  author = {Sun, Jianwei and Ruzsinszky, Adrienn and Perdew, John P.},
  journal = {Phys. Rev. Lett.},
  volume = {115},
  issue = {3},
  pages = {036402},
  numpages = {6},
  year = {2015},
  month = {Jul},
  publisher = {American Physical Society},
  doi = {10.1103/PhysRevLett.115.036402},
}

@article{Becke,
    author = {Becke, Axel D.},
    title = {Density‐functional thermochemistry. III. The role of exact exchange},
    journal = {J. Chem. Phys.},
    volume = {98},
    number = {7},
    pages = {5648-5652},
    year = {1993},
    month = {04},
    issn = {0021-9606},
    doi = {10.1063/1.464913},
}

@article{PBE,
  title = {Generalized Gradient Approximation Made Simple},
  author = {Perdew, John P. and Burke, Kieron and Ernzerhof, Matthias},
  journal = {Phys. Rev. Lett.},
  volume = {77},
  issue = {18},
  pages = {3865--3868},
  numpages = {0},
  year = {1996},
  month = {Oct},
  publisher = {American Physical Society},
  doi = {10.1103/PhysRevLett.77.3865},
}

@article{scaledSigma,
    author = {Erhard, Jannis and Fauser, Steffen and Trushin, Egor and Görling, Andreas},
    title = {Scaled σ-functionals for the Kohn–Sham correlation energy with scaling functions from the homogeneous electron gas},
    journal = {J. Chem. Phys.},
    volume = {157},
    number = {11},
    pages = {114105},
    year = {2022},
    month = {09},
    issn = {0021-9606},
    doi = {10.1063/5.0101641},
}

@article{PSA,
  title = {Power Series Approximation for the Correlation Kernel Leading to Kohn-Sham Methods Combining Accuracy, Computational Efficiency, and General Applicability},
  author = {Erhard, Jannis and Bleiziffer, Patrick and G\"orling, Andreas},
  journal = {Phys. Rev. Lett.},
  volume = {117},
  issue = {14},
  pages = {143002},
  numpages = {6},
  year = {2016},
  month = {Sep},
  publisher = {American Physical Society},
  doi = {10.1103/PhysRevLett.117.143002},
}

@article{Scuseria_2008,
  title = {Density functional with full exact exchange, balanced nonlocality of correlation,  and constraint satisfaction},
  author = {Perdew, John P. and Staroverov, Viktor N. and Tao, Jianmin and Scuseria, Gustavo E.},
  journal = {Phys. Rev. A},
  volume = {78},
  issue = {5},
  pages = {052513},
  numpages = {13},
  year = {2008},
  month = {Nov},
  publisher = {American Physical Society},
  doi = {10.1103/PhysRevA.78.052513},
}

@article{Schuch2009,
  author    = {Norbert Schuch and Frank Verstraete},
  title     = {Computational complexity of interacting electrons and fundamental limitations of density functional theory},
  journal   = {Nat. Phys.},
  volume    = {5},
  number    = {10},
  pages     = {732--735},
  year      = {2009},
  month     = {oct},
  issn      = {1745-2481},
  doi       = {10.1038/nphys1370}
}

@book{mezeyManyelectronDensitiesReduced2000,
  title = {Many-Electron Densities and Reduced Density Matrices},
  editor = {Mezey, P. G. and Cioslowski, J.},
  year = {2000},
  series = {Mathematical and {{Computational Chemistry}}},
  publisher = {Springer},
  address = {New York}
}

@article{tealeAccurateCalculationModeling2010,
  title = {Accurate Calculation and Modeling of the Adiabatic Connection in Density Functional Theory},
  author = {Teale, A. M. and Coriani, S. and Helgaker, T.},
  year = {2010},
  month = apr,
  journal = {Journal of Chemical Physics},
  volume = {132},
  pages = {164115},
  issn = {0021-9606},
  doi = {10.1063/1.3380834}
}

@article{tealeCalculationAdiabaticconnectionCurves2009,
  title = {The Calculation of Adiabatic-Connection Curves from Full Configuration-Interaction Densities: {{Two-electron}} Systems},
  author = {Teale, A. M. and Coriani, S. and Helgaker, T.},
  year = {2009},
  month = mar,
  journal = {Journal of Chemical Physics},
  volume = {130},
  pages = {104111},
  issn = {0021-9606},
  doi = {10.1063/1.3082285}
}

@incollection{lewinUniversalFunctionalsDensity2023,
  title = {Universal {{Functionals}} in {{Density Functional Theory}}},
  booktitle = {Density {{Functional Theory}}: {{Modeling}}, {{Mathematical Analysis}}, {{Computational Methods}}, and {{Applications}}},
  author = {Lewin, Mathieu and Lieb, Elliott H. and Seiringer, Robert},
  editor = {Canc{\`e}s, Eric and Friesecke, Gero},
  year = {2023},
  pages = {115--182},
  publisher = {Springer International Publishing},
  address = {Cham},
  doi = {10.1007/978-3-031-22340-2_3},
  urldate = {2025-06-17},
  isbn = {978-3-031-22340-2},
  langid = {english}
}

@incollection{tealeLiebVariationPrinciple2022,
  title = {Lieb Variation Principle in Density-Functional Theory},
  booktitle = {The {{Physics}} and {{Mathematics}} of {{Elliott Lieb}}},
  author = {Teale, Andrew M. and Helgaker, Trygve},
  year = {2022},
  volume = {1},
  pages = {527--559},
  publisher = {EMS Press},
  urldate = {2025-06-17},
  langid = {english}
}

@article{ayersAxiomaticFormulationsHohenbergKohn2006,
  title = {Axiomatic Formulations of the {{Hohenberg-Kohn}} Functional},
  author = {Ayers, Paul},
  year = {2006},
  month = jan,
  journal = {Phys. Rev. A},
  volume = {73},
  number = {1},
}

@article{ayersGeneralizedDensityfunctionalTheory2005,
  title = {Generalized Density-Functional Theory: {{Conquering theN-representability}} Problem with Exact Functionals for the Electron Pair Density and the Second-Order Reduced Density Matrix},
  shorttitle = {Generalized Density-Functional Theory},
  author = {Ayers, Paul W. and Levy, Mel},
  year = {2005},
  month = sep,
  journal = {Journal of Chemical Sciences},
  volume = {117},
  number = {5},
  pages = {507--514},
  issn = {0973-7103},
  doi = {10.1007/BF02708356},
  urldate = {2025-02-09},
  langid = {english}
}

@article{mori-sanchezManyelectronSelfinteractionError2006,
  title = {Many-Electron Self-Interaction Error in Approximate Density Functionals},
  author = {{Mori-Sanchez}, P. and Cohen, A. J. and Yang, W. T.},
  year = {2006},
  journal = {Journal of Chemical Physics},
  volume = {125},
  pages = {201102}
}

@article{ludenaFunctionalNrepresentability2matrix2013,
  title = {Functional {{N-representability}} in 2-Matrix, 1-Matrix, and Density Functional Theories},
  author = {Ludena, E. V. and Torres, F. J. and Costa, C.},
  year = {2013},
  journal = {J. Mod. Phys.},
  volume = {4},
  pages = {29318}
}

@article{ludenaNrepresentabilityUniversalityFrho2009,
  title = {On the {{N-representability}} and Universality of {{F}}[Rho] in the {{Hohenberg-Kohn-Sham}} Version of Density Functional Theory},
  author = {Ludena, E. V. and Illas, F. and {Ramirez-Solis}, A.},
  year = {2009},
  journal = {Condens. Matter Theor., Vol 23},
  pages = {354--366},
  issn = {978-981-283-661-8},
  langid = {english}
}

@article{kryachkoFormulationNRepresentableUpsilonRepresentable1991,
  title = {Formulation of {{N-Representable}} and {{Upsilon-Representable Density-Functional Theory}} .1. {{Ground-States}}},
  author = {Kryachko, E. S. and Ludena, E. V.},
  year = {1991},
  journal = {Physical Review A},
  volume = {43},
  pages = {2179--2193}
}

@article{ludenaHohenbergKohnShamVersionDFT2004,
  title = {Is the {{Hohenberg-Kohn-Sham}} Version of {{DFT}} a Semi-Empirical Theory?},
  author = {Ludena, E. V.},
  year = {2004},
  journal = {Journal of Molecular Structure-Theochem},
  volume = {709},
  pages = {25--29}
}

@article{cuevas-saavedraSymmetricNonLocal2012,
  title = {Symmetric {{Non}} Local {{Weighted Density Approximations}} from the {{Exchange-Correlation Hole}} of the {{Uniform Electron Gas}}},
  author = {{Cuevas-Saavedra}, R. and Chakraborty, D. and Rabi, S. and Cardenas, C. and Ayers, P. W.},
  year = {2012},
  month = nov,
  journal = {J. Chem. Theory Comput.},
  volume = {8},
  number = {11},
  pages = {4081--4093},
  issn = {1549-9618},
  doi = {10.1021/ct300325t},
  langid = {english}
}

@article{cioslowskiVariationalDensityMatrix2002,
  title = {Variational Density Matrix Functional Theory Calculations with the Lowest-Order {{Yasuda}} Functional},
  author = {Cioslowski, J. and Pernal, K.},
  year = {2002},
  journal = {Journal of Chemical Physics},
  volume = {117},
  pages = {67--71}
}

@article{feliciangeliNoncommutativeEntropicOptimal2023,
  title = {A Non-Commutative Entropic Optimal Transport Approach to Quantum Composite Systems at Positive Temperature},
  author = {Feliciangeli, Dario and Gerolin, Augusto and Portinale, Lorenzo},
  year = {2023},
  month = aug,
  journal = {J. Funct. Anal.},
  volume = {285},
  number = {4},
  pages = {109963},
  issn = {0022-1236},
  doi = {10.1016/j.jfa.2023.109963},
  urldate = {2025-06-23}
}

@misc{fredheimReducedDensityMatrix2025,
Author = {Håkon Richard Fredheim and Simen Kvaal},
Title = {Reduced Density Matrix Functional Theory And A Reduced Formulation Of Density Functional Theory},
Year = {2025},
Eprint = {arXiv:2510.12242},
}

@misc{note,
  author = {Erhad, Jannis and Ayers, Paul},
  title = {Data to support figure 1. Version 1.0},
  note = {https://github.com/theochem/research-data},
  year = {2025}
}

@article{Maziotti2023,
  title = {Universal Generalization of Density Functional Theory for Static Correlation},
  author = {Gibney, Daniel and Boyn, Jan-Niklas and Mazziotti, David A.},
  journal = {Phys. Rev. Lett.},
  volume = {131},
  issue = {24},
  pages = {243003},
  numpages = {8},
  year = {2023},
  month = {Dec},
  publisher = {American Physical Society},
  doi = {10.1103/PhysRevLett.131.243003},
}

@article{gebauerWellscalingNaturalOrbital2016,
  title = {A Well-Scaling Natural Orbital Theory},
  author = {Gebauer, R. and Cohen, M. H. and Car, R.},
  year = 2016,
  month = nov,
  journal = {Proceedings of the National Academy of Sciences of the United States of America},
  volume = {113},
  number = {46},
  pages = {12913--12918},
  issn = {0027-8424},
  doi = {10.1073/pnas.1615729113}
}

@article{senjeanReducedDensityMatrix2022,
  title = {Reduced Density Matrix Functional Theory from an Ab Initio Seniority-Zero Wave Function: {{Exact}} and Approximate Formulations along Adiabatic Connection Paths},
  shorttitle = {Reduced Density Matrix Functional Theory from an Ab Initio Seniority-Zero Wave Function},
  author = {Senjean, Bruno and Yalouz, Saad and Nakatani, Naoki and Fromager, Emmanuel},
  year = 2022,
  month = sep,
  journal = {Physical Review A},
  volume = {106},
  number = {3},
  pages = {032203},
  publisher = {American Physical Society},
  doi = {10.1103/PhysRevA.106.032203},
  urldate = {2024-05-01}
}

@article{ayersDensityfunctionalTheoryAdditional2009b,
  title = {Density-Functional Theory with Additional Basic Variables: {{Extended Legendre}} Transform},
  author = {Ayers, P. W. and Fuentealba, P.},
  year = 2009,
  month = sep,
  journal = {Physical Review A},
  volume = {80},
  pages = {032510},
  issn = {1050-2947},
  doi = {032510 10.1103/PhysRevA.80.032510},
  langid = {english}
}

% \printbibliography

\end{document}